\setlist{nosep}
\newcommand{\R}{\mathbb{R}}
\newcommand{\Nop}{\mathcal{C}}
\newcommand{\until}[1]{\{1,\dots, #1\}}
\newtheorem{theorem}{Theorem}[section]
\newtheorem{lemma}{Lemma}[section]
\newtheorem{remark}[theorem]{Remark} 
      \theoremstyle{plain}
\newtheorem{definition}{Definition}[section]
\DeclareSymbolFont{bbold}{U}{bbold}{m}{n}
\DeclareSymbolFontAlphabet{\mathbbold}{bbold}
\newcommand{\sign}{\operatorname{sign}}
\newcommand{\setdef}[2]{\{#1 \; | \; #2\}}
\newcommand{\fCO}{f_{\operatorname{I}}}
\newcommand\oprocendsymbol{\hbox{$\square$}}
\newcommand\oprocend{\relax\ifmmode\else\unskip\hfill\fi\oprocendsymbol}
\DeclareSymbolFont{bbold}{U}{bbold}{m}{n}
\DeclareSymbolFontAlphabet{\mathbbold}{bbold}
\newcommand\blfootnote[1]{%
  \begingroup
  \renewcommand\thefootnote{}\footnote{#1}%
  \addtocounter{footnote}{-1}%
  \endgroup
}
\begin{document}

\title{Signed Network Formation Games and Clustering Balance}
\author{Pedro Cisneros-Velarde \and Francesco Bullo}

\maketitle

\begin{abstract}
  We propose a signed network formation game, in which pairs of individuals 
 strategically change the signs of the edges in a complete
  network. These individuals 
  are members of a social network who strategically
  reduce cognitive dissonances by changing their interpersonal
  appraisals. We characterize the best-response dynamics for this game and
  prove that its implementation can dynamically drive the network to a
  sociologically meaningful sign configuration called \emph{clustering
    balance}. In this configuration, agents in the social network form one
  or more clusters that have positive relationships among their members but
  negative relationships among members of other clusters. In the past,
  various researchers in the fields of psycho-sociology, political science,
  and physics have looked at models that explain the generation of up to
  two clusters. Our work contributes to these fields by proposing a simple
  model that generates a broader class of signed networks.
\end{abstract}

\section{Introduction}
\blfootnote{
  This work is supported by the U. S. Army Research Laboratory and the
  U. S. Army Research Office under grant number W911NF-15-1-0577. The
    views and conclusions contained in this document are those of the
    authors and should not be interpreted as representing the official
    policies, either expressed or implied, of the Army Research Laboratory
    or the U.S. Government.}
\blfootnote{Pedro Cisneros-Velarde (e-mail:
    pacisne@gmail.com) and Francesco Bullo (e-mail:
    bullo@engineering.ucsb.edu) are with the University of California, Santa Barbara.}
\subsection{Problem description}

Signed graphs or networks offer a natural representation of social systems
involving friendly and antagonistic relationships between their
members. These relationships can be interpreted as interpersonal
appraisals. In the social sciences literature, there have been several
specific sign configurations that have been deemed important in a social
network, including for example \emph{structural balance}, \emph{clustering
  balance}, \emph{ranked clusters of M-cliques model} and
\emph{transitivity model}~\citep{ECJ:89}. Any of these configurations
represents a particular notion of a \emph{social balance} structure. Whenever
the social network is undirected, i.e., the positive or negative
relationships among individuals are reciprocal, all notions of social
balance reduce to structural balance or clustering balance. Consider a
complete social network, i.e., one in which all individuals know each
other. Historically, structural balance is the first notion that has been
formulated in the seminal work by Heider~\citep{FH:44,FH:46}. It
characterizes the stable configurations of signs in a social network
according to four rules known as ``Heider's rules" that eradicate cognitive
dissonances among the appraisals of its members. As a result, in the
structure of the network, one or two antagonistic \emph{clusters} of
individuals can appear in the social network. Each member of a cluster has
positive ties with every other member of the same cluster and negative ties
to individuals outside it. After Heider's work, Davis introduced the
concept of \emph{clustering balance} in his seminal work~\citep{JD:67},
which admits any arbitrary number of clusters. Obviously, the largest
number of clusters a network can have is the number of agents in the social
network.


Heider's rules provide a distinction between these two notions of social
balance. Structural balance satisfies all four rules: 1)``the friend of my
friend is my friend", 2)``the enemy of my friend is my enemy", 3)``the
friend of my enemy is my enemy", 4)``the enemy of my enemy is my friend". As
a consequence, all \emph{triads} in the network (a triad is a cycle of
three nodes) are positive, i.e., the product of their edges are
positive. In contrast, clustering balance satisfies all rules but the
fourth one. As a consequence, the network admits positive triads and triads
with three negative edges.



Since the last decade, researchers have started to incorporate dynamic
models into the structural balance theory~\citep{XZ-DZ-FYW:15}, aiming to
explain how a network can change its sign configuration so that the network
eventually satisfies structural balance. For the interested reader, we
refer to the works
\citep{KK-PG-PG:05,SAM-JK-RDK-SHS:11,VAT-PVD-PDL:13,PJ-NEF-FB:13n,WM-PCV-GC-NEF-FB:17f,PCV-NEF-AVP-FB:19}
for models based on both discrete and continuous time (deterministic)
dynamical systems with real-valued appraisals, to the
works~\citep{TA-PLK-SR:05,TA-PLK-SR:06,RF-DV-SY-OM:07} for stochastic or
physics related updating models, and to the
works~\citep{AvdR:11,MM-MF-PJK-HRR-MAS:11} for models based on a game
theoretical updating of the appraisals. However, despite this growing body
of works, to the best of our knowledge, there has been little attention to
dynamic models addressing clustering balance in the literature. In fact,
only the work~\citep{AvdR:11} has addressed the question of clustering
balance, but, as we will see, in a different setting from ours. The
work~\citep{PJ-NEF-FB:13n} has addressed the problem of dynamic models for
other notions of social balance, but it does not particularly address the
case of clustering balance nor has a game theoretical formulation.  In this
paper, we propose to fill a gap in the literature of dynamic balance theory
by providing a new model of dynamic clustering balance under a game
theoretical framework that has a psycho-sociological motivation in its
formulation.

\subsection{Statement of contribution}

Our first contribution is the proposal of a novel game-theoretical model for dynamic clustering balance with sociologically meaningful interpretations and analyze the properties of its dynamics. An important characteristic of our game, is that its players are formed by couples of agents in the network, i.e., pair of agents take a joint action. Our model is shown to be explainable as best-response dynamics of myopic players, and the eradication of cognitive dissonances of the agents in the network. 
Our key theoretical result is to prove finite time convergence of our model under best-response dynamics to signed network structures that are related to a notion of Nash equilibrium in which clustering balance is possible to be achieved. In particular, convergence to a network satisfying clustering balance is guaranteed whenever the network has up to five agents, and, for larger networks, we present compelling numerical evidence that suggests that this also happens under generic initial conditions. 

We now place our dynamic model in the context of other game theoretical signed network formation games. To the best of our knowledge, a game-theoretical interpretation to dynamic social balance was first introduced in the seminal work by 
\cite{AvdR:11}. He proposed that if a single agent in the social network can alter multiple relationships at the same time according to some strategy and randomly act ``irrationally" (i.e., change her relationships against optimizing her utility function), then convergence towards structural and clustering balance can be achieved. One example of proposed strategy followed by any agent $i$ is a ``copying mechanism": $i$ will choose another agent $j$, and copy all the relationships $j$ has over other agents $k$ (e.g., if $j$ is friend with $k$, then $i$ will become a friend of $k$ too). However, $i$ may not alter its relationship with $j$. In contrast, our model only assumes that $i$ and $j$ jointly 
alter their mutual relationship, and that this change depends on the appraisals that any other agent $k$ has over both of them. Thus, our work does not use a copying mechanism. The issue with having an agent continuously changing multiple relationships simultaneously (and repetitively) is that it implies a greater cognitive burden on the agent herself. Moreover, as pointed out by van~de~Rijt, multiple updating of relationships requires multiple consents from the other agents, e.g., making a positive relationship required both parties to consent to be in peace, so that in reality, we can think the updating of relationships as if there are multiple pairs of agents changing their relationships simultaneously. Another difference from our model is that we assume rational or utility-maximizing players and thus we do not introduce stochasticity as van~de~Rijt's work does. As we will see later, numerical evidence suggests that our model predicts good convergence properties to clustering balance under generic initial conditions. Therefore, adding stochasticity to avoid the convergence to networks with no clustering balance is not necessary in our model. 

A second relevant work is the one by 
\cite{MM-MF-PJK-HRR-MAS:11}, which only deals with the case of structural balance. In their game setting, the network topology is not fixed: agents update their relationships by establishing positive or negative links, as well as deleting links. In contrast to traditional network formation games, they assume that agents can unilaterally create new links (no need for bilateral consensus). In the vein of van~de~Rijt's model, any selected agent needs to update her relationships with all other agents in the network simultaneously. This work introduces a utility function that each agent tries to maximize in order to reduce cognitive dissonances by enforcing the four Heider's rules, and from which our work takes inspiration with the crucial difference that we only enforce three Heider's rules. 
Although their model does not need a complete graph at the beginning, it is shown that after $O(n)$ steps of playing best-response dynamics (where $n$ is the number of agents in the network), the network becomes complete and  satisfy structural balance. 
Given this precedence, we decided in our model to simply assume the network is complete and fixed from the beginning. In fact, Malekzadeh et al. show that the incorporation of both creating and deleting edges  in their game made the computation of best-response policies NP-hard for any selected agent, something which is avoided in our model. 

Finally, we mention the work by 
\cite{TH:17}. This work proposes a signed network formation game in the context of complete networks (agents cannot delete links), motivated by observations of how groups of people display bullying behavior and the interplay between dominance and status in conflict networks. This work does an exhaustive analysis of networks that correspond to the game's concept of Nash equilibrium, and found that their signed structures can correspond either to networks with only positive relationships, or to a network that belongs to the notion of social balance of the transitivity model. In particular, they found that negative edges cannot be reciprocated among agents, which implies that clustering balance is not part of their analysis. They assume that agents optimize different utility functions and that any agent can alter all of its relationships. These utility functions are, as explained by Hiller, ``based on agent's incentives to bully and gang up on each other"; whereas our work is related to the eradication of cognitive dissonances. Finally, in contrast to our work and the ones mentioned above, the work by Hiller does not provide dynamics or time evolution rules that provide convergence results to a Nash equilibrium.




As mentioned before, we focus on the study of complete graphs, which has
the highest density of triads. Complete graphs are important to study as a
first understanding towards empirical data, besides being traditionally
important as a theoretical setting for the social sciences. Triads have
been playing an important role in social network analysis since many
decades, and empirical evidence over datasets of users of different social
media have remarked the abundance and protagonist role of triad structures
in the understanding of online social networks
\citep{JL-DH-JK:10,HH-JT-LL-JL-XU:15}, even though these real-world
networks are not complete.  Moreover, many empirical studies have indicated
the persistence and abundance of triads with all negative edges, a clear
violation of the classical balance in favor of the clustering one
\citep{JD:67,JL-DH-JK:10}. Therefore, the importance of a dynamic
clustering balance model is that it allows for a theoretical explanation of
such phenomena. We also remark that the study of undirected graphs,
as pointed out by~\citep{MM-MF-PJK-HRR-MAS:11}, is very important
since triads with reciprocated edges with the same sign are highly abundant
in the three datasets studied by~\citep{JL-DH-JK:10}, whereas triads with
reciprocated edges with different signs have a presence of much less than
$1\%$.

Finally, our work is also relevant for the economics and political sciences
literature since signed network formation games can model the generation of
networks of conflict among different parties. Examples of recent studies
are networks of military alliances~\citep{MDK-DR-MT-FZ:17}, and networks of
trade and international relationships among states~\citep{MOJ-SN:15}.

\section{Preliminary modeling}

Let $N=\until{n}$ be the set of nodes, assume $n\geq 3$, and let $G=(N,E)$
 be an undirected complete graph composed by $n$
nodes and an edge set $E$ such that $\{i,j\}\in E$ is the undirected edge
between nodes $i$ and $j$. Then, $G$ defines the structure or topology of a
social network composed by $n$ agents.
%
%
%
For any $\{i,j\}\in E$, let $x_{ij}\in\{-1,+1\}$ denote the negative or positive interpersonal \emph{appraisal} or relationship between $i$ and $j$. 
The \emph{appraisal network} is a signed undirected graph defined as 
$G_{X}=(N,\{x_{ij}\}_{i,j=1\;,i\neq j}^n)$ with $x_{ij}=x_{ji}$ for any $\{i,j\}\in E$. We assume that both $G$ and $G_X$ have no self-loops. We will use the terms \emph{network} and \emph{graph} interchangeably.

\begin{definition}[Balanced, unbalanced and neutral triads]
A triad is \emph{balanced} whenever it is a positive cycle, i.e., the appraisals associated with its edges have a positive product, and it is \emph{unbalanced} whenever it is a negative cycle. A triad is \emph{neutral} whenever it is unbalanced and all of its edges have associated negative appraisals.
\end{definition}

%
%
%
%

Given a set $A$, we denote the indicator function by $1_A(x)$, so that
$1_A(x)=1$ if $x\in A$, or $1_A(x)=0$ if $x\notin A$. We omit the argument
in the indicator function whenever it is clear form the context. In what
follows, let $\sign(u)\in\{-1,0,+1\}$ denote the sign of $u\in\R$ (with
$\sign(0)=0$).

\begin{definition}[Cognitive dissonance function]
The \emph{cognitive dissonance function} defined on the appraisal network $G_X$ is given by
\begin{equation}
\label{pot_triad}
\Nop(G_X) = \sum_{\{i,j,k\}\in\mathcal{T}}
1_{\{x_{ij}=-x_{jk}x_{ki}\text{ AND }(x_{ij}= +1 \text{ OR }x_{jk}= +1 \text{ OR }x_{ki}= +1)\}},
\end{equation}
where $\mathcal{T}$ is the set of all triples of nodes that form a triad in the network, i.e., $|\mathcal{T}|={{n}\choose{3}}$.
\end{definition}

The cognitive dissonance function is equal to the total number of unbalanced triads in the network that are not neutral.
%

\begin{definition}[Clustering balance~\citep{JD:67}]
\label{balance2}
Consider an appraisal network $G_X$. We say $G_X$ has \emph{clustering balance} if there exists a partition of the $n$ agents into $k$ sets called \emph{clusters} or \emph{factions}, with $k\in\until{n}$, such that all appraisals between members of the same faction are positive and all appraisals between
  members of different factions are negative. Whenever $k\in\{1,2\}$, the
  network also satisfies \emph{structural balance}.
\label{tresb}
\end{definition}
The following lemma follows from a characterization given in~\citep{JD:67} and from the previous definition of the cognitive dissonance function.
\begin{lemma}
\label{balance2_l}
Consider an appraisal network $G_X$. The following statements are equivalent:
\begin{enumerate}[label=(\roman*)]
\item $G_{X}$ has clustering balance; \label{unob}
\item The number of balanced triads plus the number of neutral triads is
  equal to $|\mathcal{T}|$, i.e., $\Nop(G_X)$ is equal to zero.
  \label{dosb}
\end{enumerate}
Moreover, $G_X$ has structural balance when the number of balanced triads is $|\mathcal{T}|$.
\end{lemma}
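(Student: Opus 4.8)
The plan is to reduce the whole statement to one combinatorial observation about how triads are classified by their number of negative edges, and then to invoke the partition argument behind Davis's characterization. First I would classify each triad $\{i,j,k\}\in\mathcal{T}$ by its number of negative edges: an even number of negative edges (zero or two) gives a positive product and hence a balanced triad, while an odd number (one or three) gives a negative product and hence an unbalanced triad. Among the unbalanced ones, the all-negative triad is neutral and the single-negative-edge triad is not. Reading off \eqref{pot_triad}, the condition $x_{ij}=-x_{jk}x_{ki}$ is equivalent to the triad product being $-1$ (unbalanced), and the disjunction that at least one of $x_{ij},x_{jk},x_{ki}$ equals $+1$ simply excludes the all-negative, i.e.\ neutral, case. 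Hence $\Nop(G_X)$ counts exactly the triads with a single negative edge, so $\Nop(G_X)=0$ is equivalent to the statement that the balanced plus neutral triads exhaust $\mathcal{T}$; this settles the bookkeeping underlying statement \ref{dosb}.

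For the direction \ref{unob}$\Rightarrow$\ref{dosb}, I would take a clustering and inspect a triad according to how its three nodes distribute among the clusters. If all three lie in one cluster, all edges are positive; if exactly two share a cluster, the within-cluster edge is positive and the two crossing edges are negative; if the three nodes lie in three distinct clusters, all edges are negative. In each case the number of negative edges is $0$, $2$, or $3$, never $1$, so no triad is counted and $\Nop(G_X)=0$.

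For the converse \ref{dosb}$\Rightarrow$\ref{unob}, I would use the hypothesis to establish that positive appraisal is transitive: if $x_{ij}=+1$ and $x_{jk}=+1$ but $x_{ik}=-1$, then $\{i,j,k\}$ has exactly one negative edge and would contribute to $\Nop(G_X)$, a contradiction (here completeness of $G$ guarantees the edge $\{i,k\}$ exists). Combined with the symmetry $x_{ij}=x_{ji}$ and reflexivity by convention, the relation $i\sim j \iff x_{ij}=+1$ is then an equivalence relation on $N$, and I would take its classes as the clusters: intra-cluster appraisals are positive by construction, and any inter-cluster edge must be negative, since a positive crossing edge would collapse two classes into one. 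This is precisely Davis's partition, giving clustering balance. For the structural-balance addendum, the number of balanced triads equals $|\mathcal{T}|$ exactly when there are no unbalanced triads at all, i.e.\ neither single-negative-edge nor neutral triads; the absence of the former yields clustering balance as above, and the absence of neutral triads means, by the triad classification, that no three nodes sit in three distinct clusters, forcing $k\le 2$ and hence structural balance.

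I expect the transitivity step in \ref{dosb}$\Rightarrow$\ref{unob} to be the crux, since it is where the purely local hypothesis $\Nop(G_X)=0$ is upgraded into the global partition; the care needed is in verifying that the equivalence classes deliver \emph{both} positive intra-cluster and negative inter-cluster appraisals, rather than merely a locally consistent sign rule.
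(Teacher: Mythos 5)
Your proof is correct and complete, but it takes a different route from the paper: the paper does not actually prove Lemma~\ref{balance2_l} at all — it simply asserts that the lemma ``follows from a characterization given in'' \citet{JD:67} together with the definition of $\Nop$. What you have done is reconstruct, from scratch, exactly the content of Davis's characterization specialized to complete graphs. Your bookkeeping step (reading off from \eqref{pot_triad} that $\Nop$ counts precisely the triads with a single negative edge) is the part the paper does supply implicitly via its remark that $\Nop$ equals the number of non-neutral unbalanced triads; everything else — the case analysis by cluster membership for \ref{unob}$\Rightarrow$\ref{dosb}, and especially the transitivity argument showing that $x_{ij}=+1$ defines an equivalence relation whose classes serve as the clusters for \ref{dosb}$\Rightarrow$\ref{unob} — is the substance of Davis's theorem that the paper outsources. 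You correctly identify that completeness of $G$ is what makes the equivalence-relation argument work (in an incomplete graph one needs the general cycle condition, not just triads), and your handling of the structural-balance addendum is also right: absence of neutral triads forces the partition to have at most two classes, since any three nodes drawn from three distinct clusters would form an all-negative triad. The trade-off is the usual one: the paper's citation is shorter and leans on known results, while your argument makes the lemma self-contained and makes visible exactly where completeness and the ``exactly one negative edge'' condition enter — which is arguably valuable here, since the same triad classification is reused in the proofs of Theorems~\ref{lem_1co} and~\ref{thm_main}.
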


\begin{remark}[Cognitive dissonances]
\label{remark1}
Given an appraisal network $G_X$ that has clustering balance,~\cite{JD:67} 
shows that this is equivalent to enforce the satisfaction of the first, second and third Heider's rules. The fourth rule is not enforced: this assumption translates into the acceptance of triads that have three negative appraisals.
%
Violations of any of these three rules generate incoherence in the agent's cognitive system and in her social environment, also known as \emph{cognitive dissonances}, that she strives to resolve~\citep{JW:15,LF:1957}. For example, an individual whose relationships violate the second Heider's rule 
might ask herself: ``how is it possible that I like the enemy of my friends?" In order to resolve such dissonances, 
individuals seek to transform the interpersonal appraisals in the triad so that it becomes a balanced triad or a 
neutral one. Then, we can interpret the cognitive dissonance function as a measure of the 
total level of eradication of cognitive dissonances in the appraisal network: the more
balanced and neutral triads, the less cognitive dissonances we expect among members of
the social network. 
%
%
%
\end{remark}

\section{Game theoretical formulation and static analysis}

\begin{definition}[Signed network formation game]
\label{def-game}
We define the following \emph{signed network formation game}. 
\begin{itemize}
\item The normal form of the game
is composed of the set of players $E$, where any player $\{i,j\}\in E$ has 
\begin{enumerate}
\item an action space $\mathcal{A}_{ij}=\{-1,+1\}$ such that, for any action $a_{ij}\in\mathcal{A}$ performed by the player: $a_{ij}=-1$ means setting the appraisal $x_{ij}=-1$ and $a_{ij}=+1$ means setting the appraisal $x_{ij}=+1$,
\item a payoff or utility function which is defined, given the action profile $a=(a_{ij})$, by
\begin{equation}
\label{payoff_1}
u_{ij}(a) = \Delta_{ij}^{b} - \Delta_{ij}^{u}-\lambda_{ij} x_{ij}1_{\{\Delta_{ij}^u>0\}},
\end{equation}
where $\Delta_{ij}^{b}$ and $\Delta_{ij}^{u}$ are the number of balanced and unbalanced triads the agents $i$ and $j$ are part of, and $\lambda_{ij}$ is the number of other agents that are enemies of both $i$ and $j$, i.e., that have negative relationships with both $i$ and $j$.
\end{enumerate}
%
\item In the game's dynamic setting, consider that any time step $t$ belongs to the countable infinite set $\{0,1,\cdots\}$, and:
\begin{enumerate}[label=(\roman*)]
\item At any time $t$, a player $\{i,j\}\in E$ is randomly and independently picked for its updating, with all possible pair of nodes having a positive probability of being picked according to some fixed time-invariant 
distribution. \label{def111}
\item The player $\{i,j\}$ selected at time $t$ chooses an action $a_{ij}(t+1)$ to maximize its current utility $u_{ij}(t)\equiv u_{ij}(a(t))$ in the next time step. The player does not change its action if such change does not strictly improve its utility.
\end{enumerate}
%
\end{itemize}
\end{definition}

\begin{remark}\label{remarkin}
 In the formal definition of the game, the player of the game is a pair of agents $\{i,j\}\in E$. Then, we will use the terms \emph{player} and \emph{pair} or \emph{couple} of agents indistinctly, with the term \emph{agent} continuing to refer to any member of the social network. Having pairs of agents taking a ``joint action" as a single player is motivated from the fact that many social interactions and relationships (e.g., friendships and enmity), usually require some level of consent and/or are naturally bilateral.
\end{remark}

In the definition of our game, it is assumed that players are myopic because they only want to maximize their current utility. 
Intuitively, when any couple of agents update their appraisals, they ignore how their actions can affect future decisions of other couples in the network, i.e., they ignore the global effect of their actions in the future evolution of the appraisal network. We point out that the assumption of selecting only one pair of agents 
per time step in our game's dynamic setting is a common assumption found in the literature of dynamic network formation games (e.g., see~\citep{MJ-AW:02}). 
The payoff~\eqref{payoff_1} associated with the edge $\{i,j\}$ 
can be interpreted as the total cognitive dissonance load for $i$ and $j$ 
resulting from the relationships with the other agents in the network. The maximization of this utility results in the reduction of cognitive dissonances (see Remark~\ref{remark1}). We remark that, in the field of social-psychology, the eradication of cognitive dissonances has been considered a fundamental model for driving human decision processes~\citep{LF:1957}.
Notice that for the computation of $u_{ij}$, both agents $i$ and $j$ effectively count the number of balanced and unbalanced triads they belong to, without considering the triads that contain an agent who has negative relationships with both of them 
(i.e., an agent who is a common enemy), and compute their difference. If $u_{ij}\geq 0$, then both agents see that their current interpersonal appraisal or relationship is appropriate to relieve most of the cognitive dissonances since the number of balanced triads is enough to counteract the number of unbalanced ones. On the other hand, if $u_{ij} < 0$, then both agents have a cognitive discomfort due to the majority of unbalanced triads they are part of. For example, consider a game with only one triad formed by the agents $i$, $j$ and $k$ such that $x_{ij}=x_{ik}=+1$ and $x_{kj}=-1$. In the perspective of $i$, this triad violates the Heider's rule ``the enemy of my friend is my enemy", since the enemy of $k$, which is $j$, has a positive relationship with $i$. Similarly, in the perspective of $j$, this triad violates the rule ``the friend of my enemy is my enemy". According to our game, $i$ and $j$ can switch their interpersonal appraisal to being negative and thus satisfy the appropriate Heider's rules and reduce cognitive dissonances for both of them.

We consider that players can only play pure strategies, then, we can also refer to any action profile as a \emph{pure strategy profile} indistinctly. As a solution concept for our proposed signed network formation game, we adopt a notion of pure Nash equilibrium network (see, for example, the work~\citep{CAA-IK:09}).

\begin{definition}[Nash equilibrium network]
\label{def_Nash}
A \emph{Nash equilibrium} is a pure strategy profile $a^*=(a^*_{ij})$ 
such that, for any player $\{i,j\}\in E$, we have 
$u_{ij}(a^*)\geq u_{ij}(a_{ij},a^*_{-ij})$\footnote{Given the action profile $a=(a_{ij})$, we use the notation $(a'_{ij},a_{-ij})$ to denote a new action profile resulting from only changing the action of player $\{i,j\}$ by $a'_{ij}$ in $a$.} 
for any $a_{ij}\in \mathcal{A}_{ij}$. An appraisal network $G_X$ is a \emph{Nash equilibrium network} if there exists a Nash equilibrium 
$a^*$ that induces the formation of 
$G_X$. 
\end{definition}

Intuitively, a Nash equilibrium network is a network such that no 
pair of agents have an incentive to unilaterally change the sign of their interpersonal appraisals. 

%
%
\begin{definition}[Efficient networks]
An~\emph{efficient network} is any appraisal network $G_X$ induced by some action profile $a$ such that the \emph{social welfare} quantity $v(G_X)=\sum_{\{i,j\}\in E}u_{ij}(a)$ is maximized. 
\end{definition}

The study of efficient networks has a long-standing history in the literature on network formation games~(e.g.,~\citep{MJ-AW:02,MOJ:05}). In our case, we are interested in knowing what are the distribution of interpersonal appraisals that can lead to the maximization of the social welfare.

\begin{lemma}[Characterization of Nash equilibrium networks]
\label{lem1a} 
The set of Nash equilibrium networks is the set of appraisal networks such that $u_{ij}\geq 0$ for any $\{i,j\}\in E$.
\end{lemma}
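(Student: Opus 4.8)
The plan is to exploit the fact that the Nash condition decouples across edges. Since the action $a_{ij}$ of player $\{i,j\}$ only sets the single appraisal $x_{ij}$, and a unilateral deviation leaves every $x_{i'j'}$ with $\{i',j'\}\neq\{i,j\}$ untouched, checking whether $\{i,j\}$ has a profitable deviation reduces, for each edge separately, to comparing the utility $u_{ij}$ at the current action against the utility obtained by flipping $x_{ij}$ to $-x_{ij}$ (the action space $\mathcal{A}_{ij}$ has only these two elements). Thus it suffices to show, edge by edge, that the current action is a best response if and only if $u_{ij}\geq 0$; collecting this over all $\{i,j\}\in E$ then gives the characterization, since a profile is Nash exactly when no single player can strictly improve.

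The key computation is to track how $u_{ij}$ transforms under the flip $x_{ij}\mapsto -x_{ij}$. For each third agent $k$ the triad $\{i,j,k\}$ has sign $x_{ij}x_{jk}x_{ki}$, and flipping $x_{ij}$ negates this product; hence every triad through the edge $\{i,j\}$ switches between balanced and unbalanced, so the counts $\Delta_{ij}^{b}$ and $\Delta_{ij}^{u}$ simply exchange values. Moreover $\lambda_{ij}$, being the number of agents $k$ with $x_{ik}=x_{jk}=-1$, does not depend on $x_{ij}$ and is therefore invariant under the deviation, while the factor $x_{ij}$ in the last term of \eqref{payoff_1} changes sign. Writing $u_{ij}'$ for the utility after the flip, these three observations give $u_{ij}' = -u_{ij}$ whenever at least one unbalanced triad through $\{i,j\}$ is present both before and after the flip, so that the indicator $1_{\{\Delta_{ij}^{u}>0\}}$ is active in both configurations.

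With this sign-reversal identity the equivalence is immediate in the generic case: the current action admits no strict improvement exactly when $u_{ij}\geq u_{ij}'=-u_{ij}$, i.e. when $u_{ij}\geq 0$; conversely $u_{ij}\geq 0$ forces $u_{ij}'=-u_{ij}\leq 0\leq u_{ij}$, so flipping never helps. The hard part is the degenerate regime governed by the indicator $1_{\{\Delta_{ij}^{u}>0\}}$, which switches off the $\lambda_{ij}$-correction precisely when one of the two configurations has no unbalanced triad through $\{i,j\}$; there the clean identity $u_{ij}'=-u_{ij}$ can break down and the boundary configurations must be handled directly.

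The plan for these boundary cases is to observe that if all triads through $\{i,j\}$ are balanced then $u_{ij}=\Delta_{ij}^{b}\geq 0$ holds automatically, so the ``only if'' direction is never at risk there; and, using the bounds $\lambda_{ij}\leq n-2$ together with $\Delta_{ij}^{b}+\Delta_{ij}^{u}=n-2$, one checks directly that the flipped utility cannot strictly exceed the current one in exactly these configurations, so they are indeed Nash and satisfy $u_{ij}\geq 0$, consistently with the statement. I expect the delicate point to be verifying that the indicator is placed precisely so that every such boundary configuration lands on the correct side of the equivalence; once this case analysis is complete, the per-edge statement assembles into the global characterization of Nash equilibrium networks.
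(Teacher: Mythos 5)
Your per-edge reduction and the sign-reversal identity $u_{ij}'=-u_{ij}$ (valid when the indicator $1_{\{\Delta_{ij}^u>0\}}$ is active both before and after the flip) are correct, and they are in essence the paper's own argument: the paper's proof asserts, without comment, that a unilateral flip sends $u_{mn}(a^*)\geq 0$ to $u_{mn}(a)\leq 0$, and $u_{mn}(a^*)<0$ to $u_{mn}(a)>u_{mn}(a^*)$, which is precisely this identity. The genuine gap is in the boundary step that you yourself flagged as delicate: your claim that in the degenerate configurations ``the flipped utility cannot strictly exceed the current one'' is false. Take an edge with $x_{ij}=-1$ such that every other agent is a common enemy of both $i$ and $j$; the all-negative network, where this holds simultaneously for every edge, is the cleanest instance. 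Then $\Delta_{ij}^b=0$, $\Delta_{ij}^u=\lambda_{ij}=n-2>0$, the indicator is on, and
\begin{equation*}
u_{ij} \;=\; 0-(n-2)-(n-2)\cdot(-1)\cdot 1 \;=\; 0 \;\geq\; 0,
\end{equation*}
yet after flipping to $x_{ij}=+1$ every triad through $\{i,j\}$ has sign $(+1)(-1)(-1)=+1$, so $\Delta_{ij}^u$ becomes $0$, the indicator switches off, the $\lambda_{ij}$-discount disappears, and the deviating utility is $(n-2)-0-0=n-2>0$. The deviation is strictly profitable, so this profile is not a Nash equilibrium in the sense of Definition~\ref{def_Nash} even though $u_{ij}\geq 0$ on every edge.

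The consequence is that the ``if'' direction of Lemma~\ref{lem1a} cannot be recovered by patching your case analysis: under the literal payoff \eqref{payoff_1}, the all-negative network (which has clustering balance with $n$ singleton clusters) is an outright counterexample to that direction, and hence also to the first claim of Theorem~\ref{lem_1co}. Only the converse direction survives: one can check in all cases (your generic identity plus the degenerate ones) that $u_{ij}<0$ always makes the flip strictly profitable, so every Nash equilibrium network does satisfy $u_{ij}\geq 0$ on all edges. In short, you correctly located the hard part of the proof, but it is hard because it is not true as written; the paper's own proof contains the same hole, silently, since its assertion $u_{mn}(a)\leq 0$ fails in exactly this configuration. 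Everything would be consistent if the indicator were deleted from \eqref{payoff_1}: the payoff $\Delta_{ij}^b-\Delta_{ij}^u-\lambda_{ij}x_{ij}$ is exactly sign-reversing under a flip (it equals the count of balanced minus unbalanced triads that contain no common enemy of $i$ and $j$, matching both the paper's verbal description and the influence dynamics), and with that payoff your argument closes; with the indicator as written, the boundary configuration above lands on the wrong side of the claimed equivalence.
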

\begin{proof}
Consider an appraisal network induced by some underlying pure strategy profile $a^*=(a_{ij}^*)$ such that $u_{ij}(a^*)\geq 0$ for any $\{i,j\}\in E$. Now, let any $\{m,n\}\in E$ choose 
$a_{mn}\in A_{mn}$ with $a_{mn}\neq a_{mn}^*$ and define the new pure strategy $a=(a_{mn},a^*_{-mn})$. Then, $u_{mn}(a^*)\geq 0$ and $u_{mn}(a)\leq 0$, so that $u_{mn}(a^*)\geq u_{mn}(a)$. Then, by Definition~\ref{def_Nash} this appraisal network is a Nash equilibrium network.

To prove the converse, consider a Nash equilibrium network. Since no pair of agents wants to deviate their pure strategy, it follows that they have some strategy profile $a^*=(a^*)$ such that $u_{ij}(a^*)\geq u_{ij}(a)$ for any $\{i,j\}\in E$ and $a=(a_{ij},a^*_{-ij})$ with $a_{ij}\in\mathcal{A}_{ij}$. We claim that $u_{ij}(a^*)\geq 0$ for any $\{i,j\}\in E$. Otherwise, assume that there exists some $\{m,n\}\in E$ such that $u_{mn}(a^*)<0$, then we observe that $a=(-a^*_{mn},a^*_{-mn})$ gives $u_{mn}(a)>u_{ij}(a^*)$, thus leading to a contradiction.
%
%
%
%
 
\end{proof}

%

\begin{theorem}[Nash equilibrium networks and clustering balance]
\label{lem_1co}
All appraisal networks that have clustering balance are Nash equilibrium networks. Moreover, for $n\leq 5$, the set of Nash equilibrium networks is the set of networks that have clustering balance; and, for $n > 5$, there exist Nash equilibrium networks that do not have clustering balance.
\end{theorem}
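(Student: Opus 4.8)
The plan is to route everything through the characterization in Lemma~\ref{lem1a}, so that a network is a Nash equilibrium network iff $u_{ij}\ge 0$ on every edge, and through Lemma~\ref{balance2_l}, so that clustering balance is equivalent to $\Nop(G_X)=0$. First I would reduce the per-edge payoff to a transparent combinatorial condition. Fix an edge $\{i,j\}$ and sort the remaining $n-2$ agents into three classes according to their appraisals toward $i$ and $j$: let $p$ be the number of common friends ($x_{ik}=x_{jk}=+1$), $q$ the number of ``disagreements'' ($x_{ik}\ne x_{jk}$), and $r=\lambda_{ij}$ the number of common enemies ($x_{ik}=x_{jk}=-1$), so $p+q+r=n-2$. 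A short case analysis on the sign of $x_{ij}$—tracking which of these triads are balanced, unbalanced, or neutral, and how the correction term $-\lambda_{ij}x_{ij}1_{\{\Delta_{ij}^u>0\}}$ exactly cancels the common-enemy contributions—collapses the payoff to: if $x_{ij}=+1$ then $u_{ij}\ge 0 \iff p\ge q$, and if $x_{ij}=-1$ then $u_{ij}\ge 0 \iff q\ge p$. I would also record the elementary fact that, by the triad definitions, an unbalanced non-neutral triad is exactly one with a single negative edge, so by Lemma~\ref{balance2_l} clustering balance is equivalent to the absence of any triad with exactly one negative edge (a ``bad triad'').

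For the first claim, suppose $G_X$ has clustering balance. A positive edge $\{i,j\}$ then joins two agents in the same faction, so every other agent is either a common friend (same faction) or a common enemy (any other faction); hence $q=0\le p$. A negative edge joins two factions, so every other agent either disagrees on $i,j$ or is a common enemy of both; hence $p=0\le q$. In either case the reduced condition holds on every edge, so $u_{ij}\ge 0$ everywhere and $G_X$ is a Nash equilibrium network.

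For the case $n\le 5$ I would argue by contradiction, assuming a Nash network with a bad triad $\{i,j,k\}$, say $x_{ij}=x_{jk}=+1$ and $x_{ik}=-1$; there are at most two ``extra'' agents $\ell\notin\{i,j,k\}$. Applying the reduced condition to the positive edge $\{i,j\}$: agent $k$ is a disagreement for $\{i,j\}$, so $p_{ij}\ge q_{ij}$ already forces, with only $\le 2$ extra agents available, that no extra agent disagrees on $\{i,j\}$, i.e. $x_{i\ell}=x_{j\ell}$ for every $\ell$. The symmetric argument on the positive edge $\{j,k\}$ gives $x_{j\ell}=x_{k\ell}$ for every $\ell$, whence $x_{i\ell}=x_{j\ell}=x_{k\ell}$ for all extra agents. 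But then the negative edge $\{i,k\}$ has no disagreeing extra agent while $j$ is a common friend of $i,k$, so $p_{ik}\ge 1>0=q_{ik}$, contradicting $q_{ik}\ge p_{ik}$. Hence no bad triad exists and $G_X$ has clustering balance; together with the first claim this yields the stated equivalence for $n\le 5$.

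For $n>5$ it suffices to exhibit one non-clustering-balanced Nash network and then lift it to all larger $n$. For $n=6$ take vertices $\{1,\dots,6\}$, make the four edges $\{1,2\},\{2,3\},\{3,4\},\{4,1\}$ negative (a $4$-cycle) and every other edge positive, so that vertices $5$ and $6$ are friends of all. The triad $\{1,2,5\}$ has exactly one negative edge, so by the reduction there is no clustering balance; and one checks the reduced condition on all $15$ edges, where the computation is halved by the dihedral symmetry of the negative $4$-cycle together with the interchangeability of $5$ and $6$ (every edge turns out to have $p=q$ or $p>q=0$, as required). To extend to any $n>6$, append agents that are enemies of everyone: for each already-present edge such an agent is a common enemy, contributing only to $r$ and leaving $p$ and $q$—hence every reduced condition—unchanged, while each new negative edge has $p=0$ and is automatically admissible; the appended agents stay isolated in the positive graph, so a bad triad (e.g. $\{1,2,5\}$) survives and clustering balance still fails. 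The main obstacle is genuinely the $n=6$ construction: the reduction, the first claim, and the $n\le 5$ argument are essentially forced once the payoff is simplified, but finding a non-balanced sign pattern that keeps $p\ge q$ on all positive edges and $q\ge p$ on all negative edges simultaneously requires exactly the right globally admissible yet locally frustrated configuration, which is also the reason no such example exists below $n=6$.
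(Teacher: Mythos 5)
Your proposal is correct, and it follows the paper's overall skeleton: the same two lemmas (the characterization of Nash networks via $u_{ij}\ge 0$ and the equivalence of clustering balance with $\Nop(G_X)=0$), the same edge-sign case analysis for the first claim, and essentially the same counterexample for $n>5$ (your negative $4$-cycle on $\{1,2\},\{2,3\},\{3,4\},\{4,1\}$ with two universal friends is the paper's example up to relabeling, and the extension by universal enemies is also the paper's). Where you genuinely diverge is the $n\le 5$ converse. The paper proves $n=4$ by enumerating the consistent sign-product patterns $(\rho_1,\rho_2,\rho_3)$ of the triads through the fourth vertex, and for $n=5$ it only gestures at ``careful algebraic and combinatorial analysis'' plus a figure; your route instead first collapses the payoff to the clean condition ($p\ge q$ on positive edges, $q\ge p$ on negative edges, where $p,q,r$ count common friends, disagreements, and common enemies), and then runs a single counting argument—a bad triad forces all extra agents to agree on both positive edges, which makes the negative edge violate its condition—that handles $n=3,4,5$ uniformly with no enumeration. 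This buys two things: a complete written proof of the $n=5$ case, which the paper leaves informal, and a reusable payoff reduction that also streamlines the verification of the $n=6$ example (the paper's ``it is easy to check'') and of the first claim. Your reduction is easily verified to be exactly right, including the role of the term $-\lambda_{ij}x_{ij}1_{\{\Delta_{ij}^u>0\}}$ in cancelling common-enemy contributions in both sign cases.
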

\begin{proof}
We first prove the first statement of the lemma. Let~$\Delta^n_{ij}$ be the number of neutral triads in which agents $i$ and $j$ are part of. From Lemma~\ref{balance2_l}, an appraisal network $G_X$ that has clustering balance is such that, for any $\{i,j\}\in E$, $\Delta_{ij}^u=\Delta_{ij}^n$. Now, if $x_{ij}=-1$ then $\lambda_{ij}=\Delta_{ij}^n\geq 0$; and, additionally, if $\{i,j\}$ belongs only to triads with all negative edges, then $\Delta_{ij}^b=0$, otherwise, $\Delta_{ij}^b>0$. From this, it follows that $u_{ij}=\Delta_{ij}^b\geq 0$ for any $\{i,j\}\in E$. Now, if $x_{ij}=+1$ then $\Delta^u_{ij}=\Delta^n_{ij}=0$ and clearly $\Delta_{ij}^b>0$, so that $u_{ij}=\Delta^b_{ij}>0$. From these two cases, it follows from 
Lemma~\ref{lem1a} that $G_X$ is a Nash equilibrium network. This finishes the proof for the first statement of the lemma.
%
%
%
%
%
%

We proceed to prove that, for $n\leq 5$, the set of Nash equilibrium networks is the set of networks that have clustering balance. The case $n=3$ is immediate. Consider $n=4$. Let us arbitrarily label the vertices by elements of the set $\{1,2,3,4\}$. By contradiction, assume there exists at least one unbalanced triad which is not neutral, i.e., a non-neutral unbalanced triad. We will attempt to construct a Nash equilibrium network containing at least this non-neutral unbalanced triad by continuously verifying if any possible constructed appraisal network satisfies the characterization given in Lemma~\ref{lem1a}. Without loss of generality, let $\{1,2,3\}$ be a non-neutral unbalanced triad and let $x_{12}=-1$, so that $x_{23}=x_{31}=+1$. Then, it follows that
\begin{align*}
(-1)x_{24}x_{14}&=\rho_1,\\
(+1)x_{14}x_{34}&=\rho_2,\\
(+1)x_{24}x_{34}&=\rho_3;
\end{align*}
with the only possible consistent cases 
$$(\rho_1,\rho_2,\rho_3)\in\{(-1,-1,-1),(-1,+1,+1),(+1,-1,+1),(+1,+1,-1)\}.$$
Assume $(\rho_1,\rho_2,\rho_3)=(-1,-1,-1)$. For any $x_{24}\in\{-1,+1\}$, it follows that $u_{ij}<0$ for any $\{i,j\}\in E$, from which we conclude that the constructed network cannot be a Nash equilibrium one. 
Assume $(\rho_1,\rho_2,\rho_3)=(-1,+1,+1)$. For any $x_{24}\in\{-1,+1\}$, it follows that $u_{12}<0$ and the network cannot be a Nash equilibrium one. Assume $(\rho_1,\rho_2,\rho_3)=(+1,-1,+1)$. For any $x_{24}\in\{-1,+1\}$, it follows that $u_{13}<0$ and the network cannot be a Nash equilibrium one. Finally, assume $(\rho_1,\rho_2,\rho_3)=(+1,+1,-1)$. For any $x_{24}\in\{-1,+1\}$, it follows that $u_{23}<0$ and the network cannot be a Nash equilibrium one. 
Therefore, we conclude that it is not possible to construct a Nash equilibrium network that contains at least one non-neutral unbalanced triad. We conclude by contradiction, that all Nash equilibrium networks have clustering balance. 

From the previous analysis we make the following observation: if a complete graph of four vertices contains an unbalanced and a balanced triad, then it must contain one more additional balanced triad. With this observation, we can also prove the nonexistence of Nash equilibrium networks that do not satisfy clustering balance for the case $n=5$ after some careful algebraic and combinatorial analysis. See Figure~\ref{plotin_1} for all the possible 
cases of 
appraisal networks that have clustering balance.
  
%

Finally, assume $n=6$. We now construct a Nash equilibrium network that does not have clustering balance.  Assume the appraisal network has all its initial appraisals positive. Then, set the following: $x_{12}=x_{23}=x_{35}=x_{51}=-1$. 
It is easy to check that this network is a Nash equilibrium network that does not have clustering balance (e.g., observe that $u_{12}>0$ with the triad formed by nodes $\{1,2,4\}$ being unbalanced with two positive appraisals and a negative one). See Figure~\ref{plotin_1} for an illustration of this appraisal network.  Now, for any $n>6$, set again all appraisals in the appraisal network to be negative and then choose six of its nodes. For these nodes, construct a subgraph exactly as the example we just did for $n=6$, and we immediately see that this appraisal network is a Nash equilibrium network.
 
\end{proof}

 \begin{figure}[t]
   \centering
   \subfloat{\includegraphics[width=0.90\linewidth]{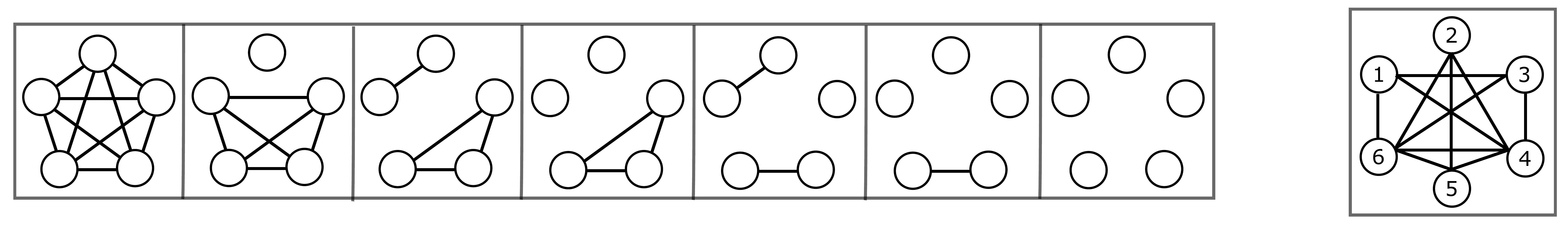}} 
 \caption{The left image shows all seven possible configurations for the (complete) appraisal network with five agents that has clustering balance. The right image shows an example of a Nash equilibrium network for six agents that does not have clustering balance. Whenever an edge is present, it corresponds to a positive appraisal; otherwise, we assume there is a negative appraisal.
 }
    \label{plotin_1}
 \end{figure}

\begin{lemma}[Efficient networks]
\label{cor_1co}
All efficient networks are Nash equilibrium networks and they have structural balance.
\end{lemma}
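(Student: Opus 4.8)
The plan is to reduce both assertions to a single structural fact: the social welfare $v(G_X)$ is maximized exactly on the structurally balanced networks. First I would record the baseline. In a structurally balanced network every triad is balanced, so $\Delta_{ij}^u=0$ for each edge, the indicator $1_{\{\Delta_{ij}^u>0\}}$ vanishes, and each edge lies in all of its $n-2$ triads as a balanced one; hence $u_{ij}=n-2$ and $v(G_X)=\binom{n}{2}(n-2)=3\binom{n}{3}$. Since the all-positive network is structurally balanced, this value is attained, so it is the candidate maximum.

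The core of the argument is an exact accounting of $v(G_X)$ for an arbitrary network obtained by double-counting triads across edges. Writing $B$ and $U$ for the total numbers of balanced and unbalanced triads (with $B+U=|\mathcal{T}|$), summing $\Delta_{ij}^b$ and $\Delta_{ij}^u$ over all edges counts each triad once per its three edges, giving $\sum_{\{i,j\}}\Delta_{ij}^b=3B$ and $\sum_{\{i,j\}}\Delta_{ij}^u=3U$. The delicate term is $\sum_{\{i,j\}}\lambda_{ij}x_{ij}1_{\{\Delta_{ij}^u>0\}}$, which I would split by the sign of $x_{ij}$. For a negative edge, $\lambda_{ij}$ equals the number of neutral triads through $\{i,j\}$ (a common enemy $k$ together with $x_{ij}=-1$ makes the triad all-negative), and whenever this is positive the triad is unbalanced, so the indicator is automatically on; summing over negative edges therefore yields exactly $3\nu$, where $\nu$ denotes the total number of neutral triads, each counted once per its three negative edges. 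For a positive edge the contribution $\lambda_{ij}1_{\{\Delta_{ij}^u>0\}}$ is merely nonnegative. Assembling these pieces and recalling from Lemma~\ref{balance2_l} that $\Nop(G_X)$ counts the non-neutral unbalanced triads (so $U=\Nop(G_X)+\nu$), the welfare collapses to the exact identity
\begin{equation*}
v(G_X) = 3\binom{n}{3} - 6\,\Nop(G_X) - 3\nu - \sum_{\{i,j\}\in E:\, x_{ij}=+1}\lambda_{ij}1_{\{\Delta_{ij}^u>0\}}.
\end{equation*}

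All three subtracted quantities are nonnegative, so $v(G_X)\le 3\binom{n}{3}$, and equality forces $\Nop(G_X)=0$ and $\nu=0$, i.e. $U=0$, which is precisely structural balance; conversely structural balance makes every subtracted term vanish. Hence the efficient networks are exactly the structurally balanced ones, which proves the second assertion. For the first assertion I would then observe that structural balance is the case $k\in\{1,2\}$ of clustering balance, so by Theorem~\ref{lem_1co} every efficient network is a Nash equilibrium network; alternatively, $u_{ij}=n-2\ge 1>0$ for every edge gives this directly through Lemma~\ref{lem1a}. I expect the main obstacle to be the bookkeeping of the $\lambda_{ij}$-term: getting the signs and the indicator right so that the negative-edge contributions fold cleanly into $3\nu$ while the positive-edge contributions remain a manifestly nonnegative remainder, which is exactly what pins down structural (rather than merely clustering) balance as the efficient configuration.
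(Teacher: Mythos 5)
Your proof is correct, but it takes a genuinely different route from the paper's. The paper argues locally and pointwise: every edge lies in exactly $n-2$ triads, so $u_{ij}\le n-2$ for each player (this bound implicitly uses that for a negative edge $\lambda_{ij}\le\Delta^u_{ij}$, since each common enemy creates an unbalanced triad), the all-positive network attains $u_{ij}=n-2$ simultaneously on every edge, and hence efficiency forces $u_{ij}=n-2$, i.e.\ $\Delta^b_{ij}=n-2$ and $\Delta^u_{ij}=0$, on every edge; Nash then follows from Lemma~\ref{lem1a} and structural balance from Lemma~\ref{balance2_l}. You instead prove a global accounting identity by double-counting triads over edges,
\begin{equation*}
v(G_X) = 3\binom{n}{3} - 6\,\Nop(G_X) - 3\nu - \sum_{\{i,j\}\in E:\, x_{ij}=+1}\lambda_{ij}1_{\{\Delta_{ij}^u>0\}},
\end{equation*}
whose derivation is sound: the fold of the negative-edge $\lambda$-contributions into $3\nu$ is justified exactly as you say (a positive $\lambda_{ij}$ on a negative edge produces a neutral, hence unbalanced, triad, turning the indicator on), and the sign bookkeeping checks out. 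Your approach buys more than the paper's: it gives an exact quantitative expression for the welfare deficit in terms of the cognitive dissonance function, the neutral-triad count, and a positive-edge penalty, and it establishes the full characterization (efficiency is \emph{equivalent} to structural balance, not merely implied by it), whereas the paper's argument is shorter, more elementary, and proves only the stated implication. Both routes rest on the same two elementary facts about $\lambda_{ij}$ --- that on negative edges it counts neutral triads and on positive edges it only penalizes --- the paper uses them to cap each $u_{ij}$, you use them to make the aggregate identity exact.
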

\begin{proof}
For any $\{i,j\}\in E$, we observe that this edge belongs to $n-2$ triads in the network, and so $u_{ij}\leq n-2$ under any action profile. Then, $u_{ij}=n-2$ if and only if $\Delta^b_{ij}=n-2$, $\Delta^u_{ij}=0$. 
Since this must hold for all edges in the appraisal network in order to maximize the social welfare $v(G_X)$ 
(which reaches the value ${n\choose{2}}(n-2)$), it follows that such network is a Nash equilibrium network. Finally, since all triads in $G_X$ are positive, then, from Lemma~\ref{balance2_l}, it immediately follows that this network has structural balance.
\end{proof}

An interesting result from Lemma~\ref{cor_1co} is that a network in which all agents have positive relationships with each other, i.e., all interpersonal appraisals are positive, is efficient and socially optimal because it maximizes the social welfare. Recall that the cognitive dissonance function is also minimized. As an additional observation, if, on the other hand, we only have 
negative appraisals, then the appraisal network has clustering balance and both the social welfare and cognitive dissonance function take value zero.

\section{Dynamic analysis}
  \begin{definition}[Influence dynamics]
    Consider $G_X(0)=(N,\{x_{ij}(0)\}_{i,j=1\;,i\neq j}^n)$ an initial appraisal network 
and that at each time $t\in\{0,1,\dots\}$, only one edge in $E$ is selected. For any selected $\{i,j\}\in E$, the \emph{influence dynamics} is defined by
  \begin{equation}
    \label{eq_CODI}
    x_{ij}(t+1) = \begin{cases}        
      \sign(\fCO(\{i,j\},G_X(t))), \qquad & \text{if } \fCO(\{i,j\},G_X(t))\neq 0,\\
      x_{ij}(t), & \text{otherwise},\\
    \end{cases}
  \end{equation}
  where $\fCO(\{i,j\},G_X(t)) = \sign{\left(\sum_{\substack{k=1\\k\neq
        i,j}}^n{x_{ik}(t)x_{kj}(t)1_{\{x_{ik}(t)= +1 \text{ OR }x_{jk}(t)=
        +1\}}}\right)}$. The appraisals associated to the rest of the unselected edges are left unchanged. 
\end{definition}

Notice that the dynamics \eqref{eq_CODI} is well defined because $x_{ij}(t)=\pm 1$ for any $\{i,j\}\in E$ and any $t>0$. 

The term ``influence dynamics" comes from a sociological interpretation of the fact that the updating of $x_{ij}$ depends on $x_{ik}x_{kj}$ with $k\neq i,j$. For example, in the perspective of agent $i$, the updating of the interpersonal appraisal or relationship she has with $j$ (i.e., the term $x_{ij}$) considers the influence that $k$ has over $i$ subject to what $k$ thinks of $j$ (i.e., the product term $x_{ik}x_{kj}$). Notice that whenever both $i$ and $j$ are enemies of $k$, the influences that $k$ has over $i$ and $j$ are not considered (i.e., $k$ cannot be trusted, since she is a common enemy). Intuitively, this implies that the fourth Heider's rule  ``the enemy of my enemy is my friend" is not enforced by the agents if they change their interpersonal appraisals according to the influence dynamics.
%
%


Now, from the assumptions of utility-maximizing and myopic players in Definition~\ref{def-game}, couples should play the game according to some policy such that they choose actions in the current time step that will maximize the payoffs associated to their interpersonal appraisals in the next time step, 
i.e., they should play~\emph{best-response dynamics}. Every pair of agents want to optimally change their current relationship in order to reduce cognitive dissonance. Recall that whenever $u_{ij} =0$, $\{i,j\}$ has no incentive to flip the sign of its appraisal since this will not increase its utility. Then, the following theorem provides a characterization of best-response dynamics taken by the agents.
%

\begin{theorem}[Best-response dynamics]\label{thm:best-response}
For any $\{i,j\}\in E$ chosen at any time step $t\geq 0$, let its action by updated by $a_{ij}(t+1) = x_{ij}(t+1)$, where the appraisal $x_{ij}(t+1)$ is updated according to the influence dynamics~\eqref{eq_CODI}. Then, the strategy $a(t)$, $t>0$, is a \emph{best-response dynamics} for the proposed signed network formation game. 
\end{theorem}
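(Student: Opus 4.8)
The plan is to fix the selected pair $\{i,j\}$ together with the current appraisals of every edge not incident to both of them, and to exploit the fact that the player's single decision variable $x_{ij}$ enters the utility~\eqref{payoff_1} only through the $n-2$ triads $\{i,j,k\}$, $k\neq i,j$. Accordingly, I would partition the remaining agents by the (fixed) pair $(x_{ik},x_{jk})$: let $\alpha$ count the $k$ with $x_{ik}=x_{jk}=+1$, let $\beta$ count those with exactly one of $x_{ik},x_{jk}$ equal to $+1$, and let $\gamma$ count those with $x_{ik}=x_{jk}=-1$, so that $\alpha+\beta+\gamma=n-2$ and $\gamma=\lambda_{ij}$. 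Both objects I need to reconcile reduce to this data. On one hand, the summand inside $\fCO$ in~\eqref{eq_CODI} carries the indicator $1_{\{x_{ik}=+1\text{ OR }x_{jk}=+1\}}$, which is off precisely for the common-enemy class, so the influence sum equals $\alpha-\beta$ and $\fCO(\{i,j\},G_X)=\sign(\alpha-\beta)$. On the other hand, I can read off $\Delta^b_{ij}$ and $\Delta^u_{ij}$ as explicit functions of $x_{ij}$ from the same classification.

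Next I would compute the two candidate utilities. For $x_{ij}=+1$ the class-$\alpha$ and class-$\gamma$ triads are balanced and the class-$\beta$ triads unbalanced, giving $\Delta^b_{ij}=\alpha+\gamma$ and $\Delta^u_{ij}=\beta$; for $x_{ij}=-1$ the class-$\beta$ triads are balanced, the class-$\alpha$ triads unbalanced and non-neutral, and the class-$\gamma$ triads neutral, giving $\Delta^b_{ij}=\beta$ and $\Delta^u_{ij}=\alpha+\gamma$. Substituting into~\eqref{payoff_1} and using $\lambda_{ij}=\gamma$, one sees that the penalty $-\lambda_{ij}x_{ij}1_{\{\Delta^u_{ij}>0\}}$ is engineered exactly to discount the common-enemy triads: in the regime $\beta>0$ and $\alpha+\gamma>0$ it removes the $\gamma$ balanced common-enemy triads when $x_{ij}=+1$ and the $\gamma$ neutral triads when $x_{ij}=-1$, leaving the clean identities $u_{ij}(+1)=\alpha-\beta$ and $u_{ij}(-1)=\beta-\alpha$. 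Hence $u_{ij}(+1)-u_{ij}(-1)=2(\alpha-\beta)$, so the unilaterally optimal appraisal is $\sign(\alpha-\beta)$ whenever $\alpha\neq\beta$, matching the influence update, while for $\alpha=\beta$ the two utilities coincide and both the best-response rule (no change without strict improvement) and~\eqref{eq_CODI} (where $\fCO=0$) retain the current appraisal. This settles the main regime.

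The step I expect to be the crux is the boundary analysis, where one of the two actions yields $\Delta^u_{ij}=0$ and switches the indicator $1_{\{\Delta^u_{ij}>0\}}$ off, destroying the clean cancellation above. Most sub-cases still reconcile after a short separate computation: for $\beta=0$ with $\alpha>0$ one checks $u_{ij}(+1)=\alpha+\gamma>-\alpha=u_{ij}(-1)$, consistent with $\sign(\alpha-\beta)=+1$, and for $\alpha=\gamma=0$ one checks $u_{ij}(-1)=\beta>-\beta=u_{ij}(+1)$, consistent with $\sign(\alpha-\beta)=-1$ (the fully degenerate count $\alpha=\beta=\gamma=0$ being excluded by $n\ge 3$). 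The genuinely delicate configuration is $\alpha=\beta=0$, i.e.\ when every other agent is a common enemy of both $i$ and $j$ (the all-negative network being the extreme instance): there the influence sum vanishes, so~\eqref{eq_CODI} freezes the appraisal, while the utilities are $u_{ij}(+1)=\gamma$ and $u_{ij}(-1)=0$. If the current appraisal is already $x_{ij}=+1$ both rules agree, but the sub-case $x_{ij}=-1$ is exactly where the naive influence rule and the strict best response can part ways, and reconciling it — via the convention on admissible reachable configurations, or via the genericity hypothesis under which such all-enemy neighborhoods are not encountered — is the point at which I would concentrate the rigor to make the equivalence hold.
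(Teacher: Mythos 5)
Your decomposition is essentially the paper's own argument: the paper's proof introduces the sets $P_{ij}(t)$ (agents positive toward both $i$ and $j$) and $N_{ij}(t)$ (agents with mixed appraisals toward $i$ and $j$), which are exactly your $\alpha$ and $\beta$ classes, observes that $\fCO(\{i,j\},G_X(t))=\sign(|P_{ij}(t)|-|N_{ij}(t)|)$, and matches the sign of this difference against the utility-maximizing action. Your computations in the main regime, and in the boundary cases $\beta=0<\alpha$ and $\alpha=\gamma=0$, are correct and reproduce what the paper asserts.

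Where you differ is that your boundary analysis is sharper than the paper's, and the corner case you isolate --- $\alpha=\beta=0<\gamma$ with current appraisal $x_{ij}(t)=-1$ --- is a genuine defect of the theorem as stated, not a hole you failed to plug. There $\fCO=0$, so the influence dynamics~\eqref{eq_CODI} freezes the appraisal at $-1$ with utility $u_{ij}(-1)=0$, while deviating to $+1$ yields $u_{ij}(+1)=\gamma>0$ (all $\gamma$ common-enemy triads become balanced and $\Delta^u_{ij}=0$ switches the penalty in~\eqref{payoff_1} off); by the game's own tie-breaking rule in Definition~\ref{def-game} (``the player does not change its action if such change does not \emph{strictly} improve its utility''), the best response at such a state is to flip, so the influence update is not a best response there. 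The paper's proof never meets this case: its rule ``if $|P_{ij}(t)|=|N_{ij}(t)|$, keep the same action'' is backed only by the claim, made just before the theorem and built into~\eqref{eq_brd} (which conditions on $u_{ij}(t)\neq 0$ rather than on strict improvability), that $u_{ij}=0$ implies no profitable flip --- and that claim is false precisely here, because the $\lambda_{ij}$ term breaks the antisymmetry $u_{ij}(+1)=-u_{ij}(-1)$ that holds in your main regime. The same corner case propagates backward in the paper: the all-negative network has clustering balance and $u_{ij}=0$ for every pair, yet any pair strictly gains by flipping to $+1$ (merging two singleton clusters), so it is not a Nash equilibrium in the sense of Definition~\ref{def_Nash}, contradicting Lemma~\ref{lem1a} and the first claim of Theorem~\ref{lem_1co}. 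Your instinct to concentrate the rigor there is exactly right; a correct statement must either exclude states in which the selected pair has $P_{ij}(t)\cup N_{ij}(t)=\emptyset$, or amend the utility/tie-breaking convention so that a zero-utility player with a profitable flip is handled consistently --- the paper does neither.
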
 
\begin{proof}
Let us first note that, from the fact that players are utility-maximizing 
and the definition of the game, for any chosen $\{i,j\}\in E$, the following is the best-response by player $\{i,j\}$ 
in order to maximize its utility~\eqref{payoff_1}:
\begin{equation}
\label{eq_brd}
  a_{ij}(t+1) = \begin{cases}
    \underset{a_{ij}(t)\in\{+1,-1\}}{\text{arg max}}u_{ij}(t), \qquad & \text{if } u_{ij}(t)\neq 0,\\
       a_{ij}(t), & \text{otherwise},\\
  \end{cases}
\end{equation}
Now, assume some fixed action profile $a(t)$. Let $P_{ij}(t)=\setdef{k\in N}{x_{ik}(t)x_{kj}(t)>0\text{ with }x_{ik}(t)=x_{kj}(t)= +1}$ and $N_{ij}(t)=\setdef{k\in N}{x_{ik}(t)x_{kj}(t)<0}$. 
Then, from the fact that balanced triads are ones such that the product of two of its appraisals are equal in sign to the remaining appraisal, it follows the best-response for player $\{i,j\}$ in~\eqref{eq_brd} satisfies the following rule: 
\begin{itemize}
\item If $|P_{ij}(t)|>|N_{ij}(t)|$, then choosing $a_{ij}(t+1)=+1$ leads to $u_{ij}(t+1)>0$ (otherwise, we would have $u_{ij}(t+1)<0$).
\item If $|P_{ij}(t)|<|N_{ij}(t)|$, then choosing $a_{ij}(t+1)=-1$ leads to $u_{ij}(t+1)>0$ (otherwise, we would have $u_{ij}(t+1)<0$).
\item If $|P_{ij}(t)|=|N_{ij}(t)|$, then the player keeps the same action, i.e., $a_{ij}(t+1)=a_{ij}(t)$.
\end{itemize}
Then, it follows from the mathematical expression of the influence dynamics in~\eqref{eq_CODI} that $\fCO(\{i,j\},G_X(t)) = \sign(|P_{ij}(t)|-|N_{ij}(t)|)$ and then by setting 
$a_{ij}(t+1)=x_{ij}(t+1)$, $t\geq 0$, the player $\{i,j\}\in E$ is implementing best-response dynamics.
%
 
\end{proof}

Theorem~\ref{thm:best-response} states that, if pairs of agents update their
appraisals according to the influence dynamics, they are indeed playing 
best-response dynamics 
for the game. Then, it only remains to study the properties of convergence to Nash equilibrium networks of the trajectories of the dynamics~\eqref{eq_CODI} under the definition of our signed network formation game (which implies that these trajectories will now have a stochastic nature).
%

The following theorem establishes the convergence of the best-response dynamics played in the game, and we remark that any probability statement is respect to the induced measure by the underlying stochastic process that selects the pair of agents in the signed network formation game (see Definition~\ref{def-game}).

%
\begin{theorem}[Convergence of the influence dynamics]
\label{thm_main}
Consider the initial appraisal network $G_X(0)=(N,\{x_{ij}(0)\}_{i,j=1\;,i\neq j}^n)$. 
Then, under the influence dynamics~\eqref{eq_CODI} with the players being randomly selected as in Definition~\ref{def-game}, 
$G_X(t)$ converges with probability one to a Nash equilibrium network in finite time. Moreover, $G_X(t)$ converges to clustering balance with probability one for $n\leq 5$. 
\end{theorem}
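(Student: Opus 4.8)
The plan is to treat the cognitive dissonance function $\Nop$ as a potential (Lyapunov) function for the stochastic best-response process, and to combine a deterministic ``finitely many sign flips'' bound with an almost-sure ``every player acts infinitely often'' argument. By Theorem~\ref{thm:best-response} the influence dynamics~\eqref{eq_CODI} coincides with best-response play, so it suffices to show that the process reaches, in finite time and with probability one, a configuration at which no selected player can strictly improve its utility; by Lemma~\ref{lem1a} such a configuration is exactly a Nash equilibrium network. The second claim is then immediate: for $n\le 5$, Theorem~\ref{lem_1co} identifies Nash equilibrium networks with clustering-balance networks, so almost-sure convergence to a Nash network \emph{is} almost-sure convergence to clustering balance.

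First I would establish that $\Nop$ is nonincreasing along the dynamics. The crucial structural observation is that flipping the sign of a single edge $\{i,j\}$ changes the balanced/unbalanced/neutral status only of the $n-2$ triads $\{i,j,k\}$ containing both $i$ and $j$; every other triad is untouched. Hence the change in the global quantity $\Nop$ equals the change in the number of non-neutral unbalanced triads through $\{i,j\}$ alone. Using the sets $P_{ij}$ and $N_{ij}$ from the proof of Theorem~\ref{thm:best-response} (and writing $M_{ij}$ for the common-enemy set counted by $\lambda_{ij}$), this local count equals $|N_{ij}|$ when $x_{ij}=+1$ and $|P_{ij}|$ when $x_{ij}=-1$, since the $M_{ij}$-triads are either balanced (when $x_{ij}=+1$) or neutral (when $x_{ij}=-1$) and hence never contribute to $\Nop$. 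The best-response rule sets $x_{ij}$ to the sign achieving $\min(|P_{ij}|,|N_{ij}|)$ and flips only when this is a strict improvement, i.e.\ only when the current local count is $\max(|P_{ij}|,|N_{ij}|)$. Therefore every genuine flip decreases $\Nop$ by $\bigl||P_{ij}|-|N_{ij}|\bigr|\ge 1$, while a step in which the player keeps its sign leaves $\Nop$ unchanged.

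Next I would convert this into termination. Since $\Nop$ takes values in $\{0,1,\dots,\binom{n}{3}\}$, is nonincreasing, and strictly drops by at least one at each flip, the \emph{total} number of sign flips along any sample path is at most $\Nop(G_X(0))\le\binom{n}{3}$, deterministically. Consequently, on every sample path the configuration changes only finitely often and equals some fixed network $G_X^{\ast}$ from a finite (random) time onward; this already yields finite-time convergence. It remains to argue that $G_X^{\ast}$ is almost surely Nash. By the selection rule in Definition~\ref{def-game}, each edge is picked independently at every step with a fixed positive probability, so by the second Borel--Cantelli lemma each player is selected infinitely often with probability one, and hence so are all players simultaneously. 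On this probability-one event, if $G_X^{\ast}$ were not Nash then by Lemma~\ref{lem1a} some $\{i,j\}$ would have $u_{ij}(G_X^{\ast})<0$; that player, being selected at some time after the last flip, would strictly improve and thus flip, contradicting the fact that no flips occur once $G_X^{\ast}$ is reached. Hence $G_X^{\ast}$ is almost surely a Nash equilibrium network, proving the first statement, and the $n\le 5$ statement follows by Theorem~\ref{lem_1co}.

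I expect the main obstacle to be the first step, namely verifying rigorously that $\Nop$ is a valid potential—in particular, the bookkeeping showing that a best-response flip changes the \emph{global} $\Nop$ by exactly the change in the \emph{local} non-neutral-unbalanced count, including the careful treatment of the neutral/common-enemy triads tracked by the $\lambda_{ij}$ term. Once this monotone potential is in hand, the finite-flip bound is deterministic and the passage to an almost-sure Nash limit is a routine absorbing-state argument; the only place where probability genuinely enters is in ruling out, via infinitely-often selection, that the process freezes at a non-Nash configuration whose unhappy player is never chosen.
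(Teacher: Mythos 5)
Your proof is correct, and its core coincides with the paper's: both arguments show that every genuine sign flip by a selected player strictly decreases the cognitive dissonance function $\Nop$, both identify the configurations at which no flip can occur with Nash equilibrium networks via Lemma~\ref{lem1a}, and both obtain the $n\le 5$ claim directly from Theorem~\ref{lem_1co}. Your bookkeeping via the partition of the other $n-2$ agents into $P_{ij}$, $N_{ij}$, and the common-enemy set is equivalent to the paper's computation with $\Delta^b_{ij},\Delta^u_{ij},\Delta^n_{ij},\lambda_{ij}$; in fact the paper shows the drop in $\Nop$ equals exactly $u_{ij}(t)$, which is the same number as your $-\bigl||P_{ij}|-|N_{ij}|\bigr|$. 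Where you genuinely differ is the probabilistic wrap-up. The paper casts the process as a finite time-homogeneous Markov chain, declares the Nash networks to be the absorbing states, and argues that every other state is transient because from it there is positive probability of a $\Nop$-decreasing transition; finite-time absorption with probability one then follows from standard absorbing-chain theory. You instead extract a deterministic statement---at most $\Nop(G_X(0))\le\binom{n}{3}$ flips can occur on any sample path, so every path freezes at some network $G_X^{\ast}$ in finite time---and then use the second Borel--Cantelli lemma (valid here since selections are i.i.d.\ across time with each edge having positive probability, per Definition~\ref{def-game}) to conclude that a non-Nash $G_X^{\ast}$ would almost surely be revisited by its unhappy player and flip, a contradiction. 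Your route is more elementary and self-contained, and it yields an explicit quantitative bound on the total number of edge changes; the paper's route is shorter by outsourcing convergence to cited Markov chain results. One incidental observation your computation confirms: the paper's line ``$\Nop(G_X(t+1))>\Nop(G_X(t))$'' concluding its two cases is a typo for ``$<$'', since both its derivation and yours show the difference equals $u_{ij}(t)<0$.
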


\begin{proof} 
We first claim that the influence dynamics in our signed network formation game defines a 
time-homogeneous finite-state Markov chain $\mathcal{M}$ with state space $\mathcal{B}$ of all appraisal networks $G_{X}$, with the state $G_{X}(t)=(N,\{x_{ij}(t)\}_{i,j=1\;,i\neq j}^n)\in \mathcal{B}$ being the current appraisal network at time $t$ and defined by the set  $\{x_{ij}(t)\}_{\{i,j\}\in E}$. To see this, we first notice that for any time step $t>0$ and any given appraisal network $G_X(t)$ described by the influence dynamics, it follows from the stochastic process of player selection in Definition~\ref{def-game}
that it is completely possible to determine all the possible outcomes for the appraisal network in the next time-step. Then, the Markov property is satisfied since the (distribution of) possible outcomes for the appraisal matrix at $t+1$ depends only on time index $t$. Moreover, the appraisal network $G_X(t)$ can have up to $|E|$ 
possible different outcomes. Then, the Markov chain $\mathcal{M}$ is well-posed, and it is also time-homogeneous from the time-invariance in the probability of player selection (see Definition~\ref{def-game}). 
%

If we assume there is a non-empty set of absorbing states  $\mathcal{W}_a$ in the Markov chain $\mathcal{M}$, then
to prove convergence to the absorbing states, we need to show that $\mathcal{B}\setminus\mathcal{W}_a$ is only composed of transient states~\citep{GG-DR:01}. Since $\mathcal{M}$ has finite states, the convergence will be achieved in finite time with probability one.


Consider $G_X(t)\in\mathcal{B}$ 
 and any selected edge $\{i,j\}\in E$. If $u_{ij}(t)\geq 0$, then $G_X(t+1)=G_X(t)$, and so $\Nop(G_X(t+1))=\Nop(G_X(t))$. Now, assume $u_{ij}(t)<0$, which implies $\Delta^u_{ij}>0$. Let~$\Delta^n_{ij}$ be the number of neutral triads in which agents $i$ and $j$ are part of. Notice that the cognitive dissonance function can be equivalently expressed as the number of unbalanced triads minus the number of neutral triads. 
Then, since the change of the interpersonal appraisal between $i$ and $j$ can only affect the nature (i.e., being balanced or unbalanced) of the $n-2$ triads that $i$ and $j$ are part of, it follows that:
\begin{enumerate}[label=(\roman*)]
\item If $x_{ij}(t)=+1$, then $x_{ij}(t+1)=-1$ and $\Delta^n_{ij}(t)=0$. Then, $\Delta_{ij}^b(t+1)=\Delta_{ij}^u(t)$, $\Delta_{ij}^u(t+1)=\Delta_{ij}^b(t)$ and $\Delta_{ij}^n(t+1)=\lambda_{ij}(t)$. Then, we have that
\begin{align*}
\Nop(G_X(t+1))-\Nop(G_X(t))&= (\Delta^u_{ij}(t+1)-\Delta^{n}_{ij}(t+1))-(\Delta^u_{ij}(t)-\Delta^{n}_{ij}(t))\\
&= (\Delta^b_{ij}(t)-\lambda_{ij}(t))-\Delta^u_{ij}(t) =u_{ij}(t) <0.
%
\end{align*}\label{uno1a}
%
%
%
%
%
\item If $x_{ij}(t)=-1$, then $x_{ij}(t+1)=+1$ and $\Delta^n_{ij}(t)=\lambda_{ij}(t)$. Then, $\Delta_{ij}^b(t+1)=\Delta_{ij}^u(t)$, $\Delta_{ij}^u(t+1)=\Delta_{ij}^b(t)$ and $\Delta_{ij}^n(t+1)=0$. Then, we have that
\begin{align*}
\Nop(G_X(t+1))-\Nop(G_X(t))&= (\Delta^u_{ij}(t+1)-\Delta^{n}_{ij}(t+1))-(\Delta^u_{ij}(t)-\Delta^{n}_{ij}(t))\\
&= \Delta^b_{ij}(t)-(\Delta^u_{ij}(t)-\lambda_{ij}(t)) =u_{ij}(t) <0.
\end{align*}\label{dos1a}
\end{enumerate}
From this analysis, we conclude that $\Nop(G_X(t+1))>\Nop(G_X(t))$. 

Now, from the best-response dynamics, if we have $u_{ij}(t)\geq 0$ for all $\{i,j\}\in E$, then $G_X(t+1)=G_X(t)$ with probability one. Therefore, we conclude that the set of absorbing states $\mathcal{W}_a$ is the set of Nash equilibrium networks (see the characterization on Lemma~\ref{lem1a}). 
 Then, we conclude that the sequence $(\Nop(G_X(t)))_t$ is non-increasing with probability one, and that, for any $G_X(t)\in\mathcal{B}\setminus\mathcal{W}_a$, there is always a positive probability of selecting an edge such that $\Nop(G_X(t+1))<\Nop(G_X(t))$. 
This let us conclude that $\mathcal{B}\setminus\mathcal{W}_a$ is composed of transient states. 

Finally, the second statement of the theorem 
follows directly from Theorem~\ref{lem_1co}. 
\end{proof}

\section{Additional discussion and results} 

\subsection{Connection to optimization and energy functions}
We observe from our convergence theorem that the influence dynamics attempts to solve the following combinatorial optimization problem
\begin{equation*}
\begin{aligned}
& \underset{\{x_{ij}\}\in\{-1,+1\}^{|E|}}{\text{minimize}}
& & \Nop(G_X).
\end{aligned}
%
\end{equation*}
%
We know a global minimum for this function corresponds to a network that has clustering balance.
The idea of proposing discrete optimization problems that a dynamic social balance model must solve
has been previously proposed in the physics community~\citep{SAM-SHS-JMK:09} only for the case of dynamic structural balance with the 
minimization of a potential energy function associated to generalized Ising models and complete social networks. 

\subsection{Numerical evidence}

In this section we show some numerical results about the convergence of the influence dynamics to clustering balance. We say that an initial condition at $t=0$ is \emph{generic} for the appraisal matrix $G_X(0)$ if every initial appraisal is independently sampled with probability $0<p<1$ of taking the value $+1$ and probability $1-p$ of taking the value $-1$. 

We analyze the evolution of appraisal networks for different network sizes $n\in\{3,\dots,25\}$. For each fixed $n$, we generate $10000$ generic initial conditions in which each entry of the initial appraisal network $G_X(0)$ takes the value $+1$ or $-1$ with equal probability $0.5$. The results are shown in Figure~\ref{plot_0}. As expected, for $n\leq 5$, all appraisal networks converged to a balanced network. Remarkably, we find that the success rate was no less than $99.98\%$ for $n\in\{15,\dots,25\}$, and no less than $99.72\%$ for $n\in\{6,\dots,14\}$. In Figure~\ref{plot_1} we show the empirical frequency of the final number of clusters for those networks that successfully converged to clustering balance for $n\in\{11,20\}$, and we remark that no convergence to efficient networks, i.e., to appraisal networks having structural balance, has been observed in these simulations.
%

Observing these and other numerical simulations for other generic initial conditions and greater values of $n$ (not shown here), we propose the following informal conjecture for $n>5$: 
under generic initial conditions, there is a very high probability of convergence to clustering balance and this probability goes to $1$ as $n$ increases. In other words, the basin of attraction of appraisal networks that have clustering balance is much larger than the ones for other Nash equilibrium networks.

\subsection{Connection to structural balance theory}
\label{con-sb}
We know that structural balance enforces the fourth Heider's rule, so we can modify our model in this paper so that this rule is enforced. This would mean changing our signed network formation game, more specifically, the utility function for a player $\{i,j\}$ simply becomes the difference between balanced and unbalanced triads.
We would also need to modify our influence dynamics by replacing any indicator function by a constant number $1$, and the cognitive dissonance function so that it simply counts unbalanced triads. As a result, 
the (modified) influence dynamics will seek to minimize this (modified) cognitive dissonance function, aiming to attain the minimum value $\Nop(G_X)=0$ for some appraisal network $G_X$.

We analyze the evolution of appraisal networks in the same settings as in the previous subsection. The results are shown in Figure~\ref{plot_0}. We also find that for $n\leq 5$, all appraisal networks converged to a network that has structural balance. For $n>5$, we found that the success rate was no less than $93.47\%$. It seems that the success rate stabilizes and oscillates around a value with the success rate for an odd number of agents being slightly better than for an even number of them. We cannot conclude if this observed behavior will be the case for much larger values of $n$, but we expect to.
%

As mentioned in the introduction of this paper, empirical evidence has suggested the strong presence of triads with all negative edges in online social networks, which favors the presence of the notion of clustering balance over structural balance. The contrast between the obtained numerical convergence results for clustering balance compared to ones for structural balance may be a theoretical predictor of the empirical observations.

%
%

%

 \begin{figure}[t]
   \centering
   \subfloat[Clustering balance]{\includegraphics[width=0.45\linewidth]{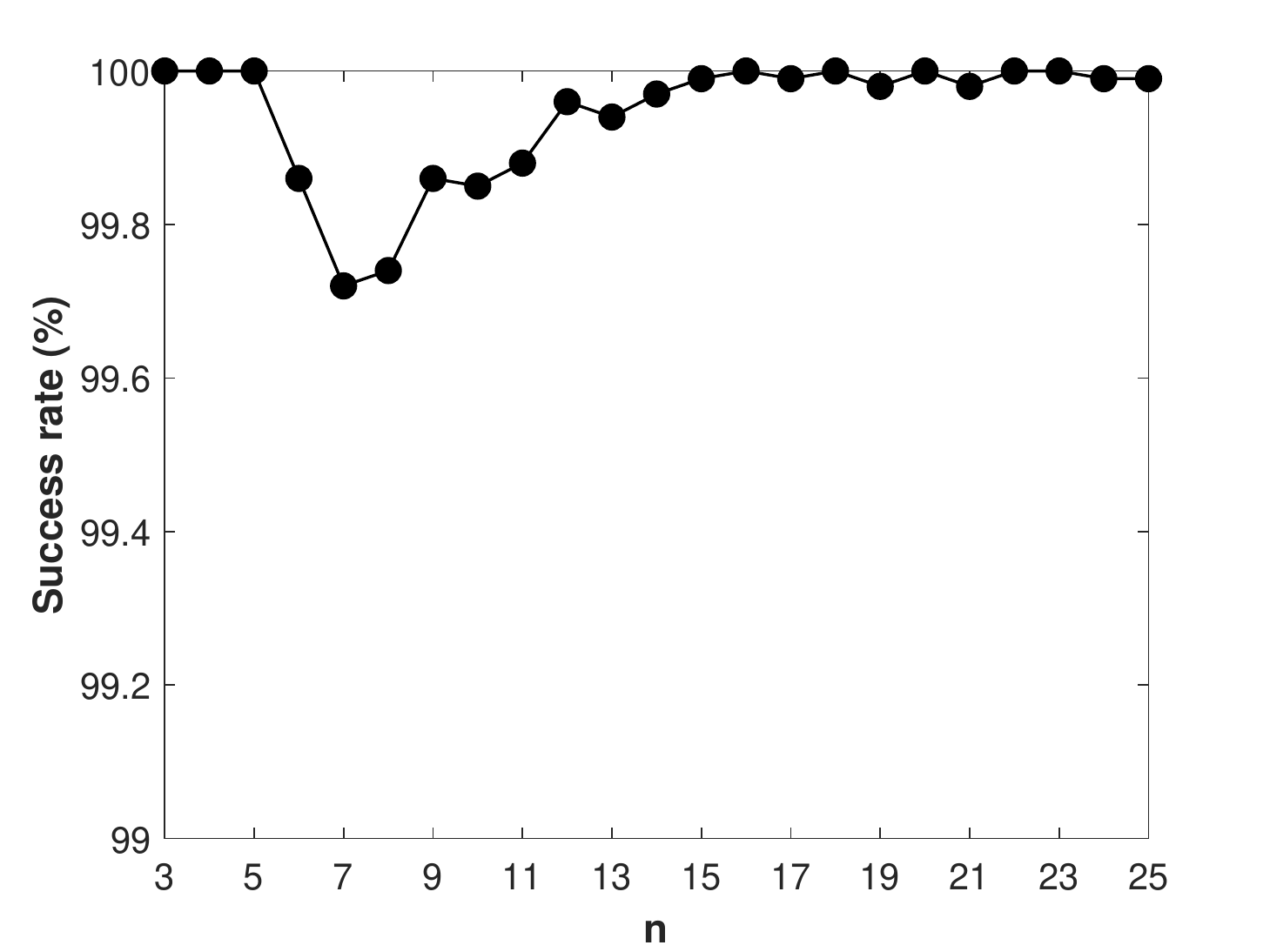}} 
  \subfloat[Structural balance]{\includegraphics[width=0.45\linewidth]{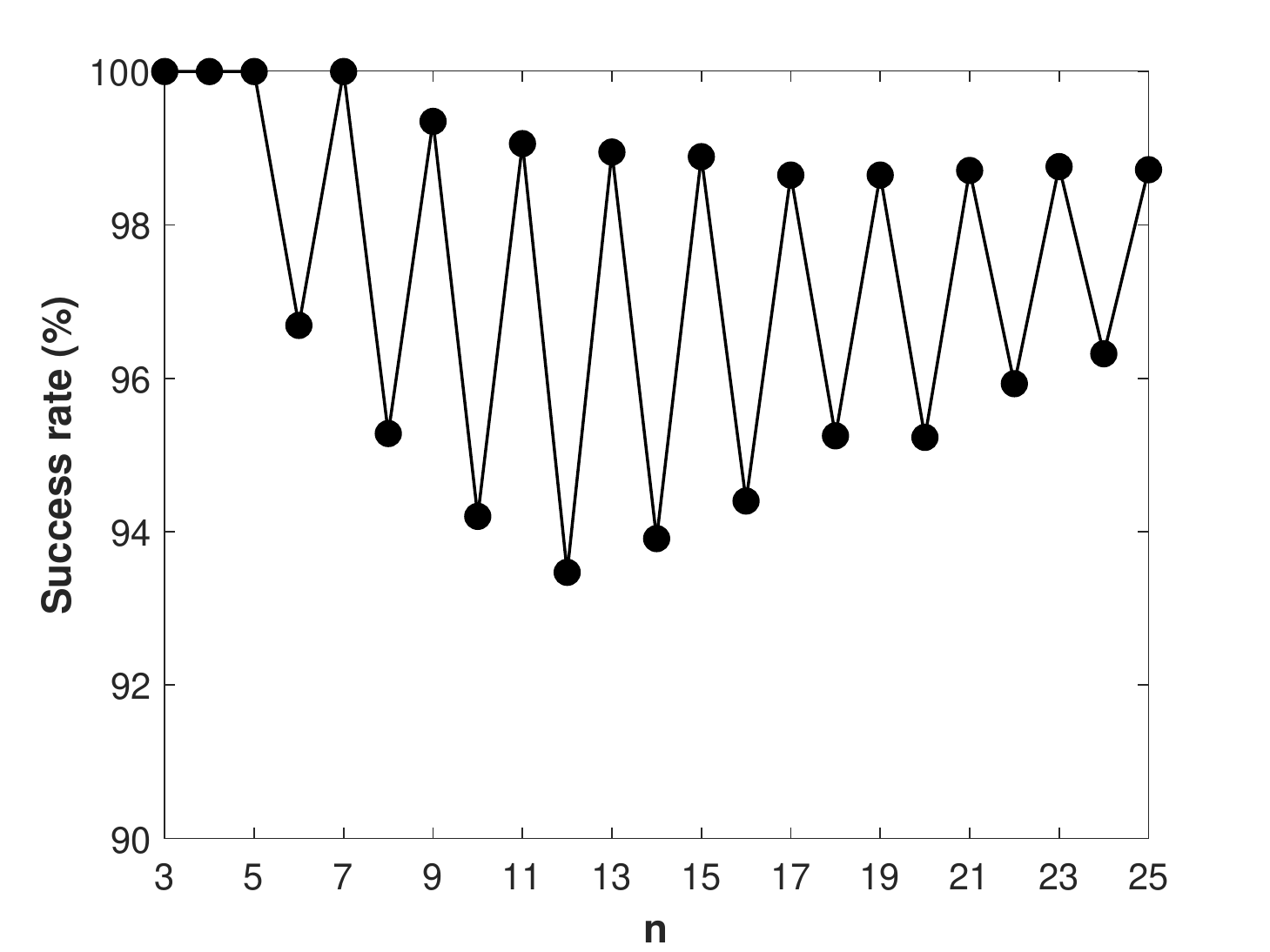}}
 \caption{
Success rate of convergence to (a) clustering balance (note range $99\%-100\%$) or (b) structural balance (note range $99\%-100\%$) for appraisal networks with different number of agents or nodes $n$. 
For each fixed size $n$, $10000$ simulations were performed under generic initial conditions. For clustering balance, the appraisal networks evolve according to the influence dynamics (Definition~\ref{eq_CODI}), and for structural balance, they evolve according to a modification on these dynamics as stated in subsection~\ref{con-sb}.
}
    \label{plot_0}
 \end{figure}

 \begin{figure}[t]
   \centering
   \subfloat[$n=11$]{\includegraphics[width=0.45\linewidth]{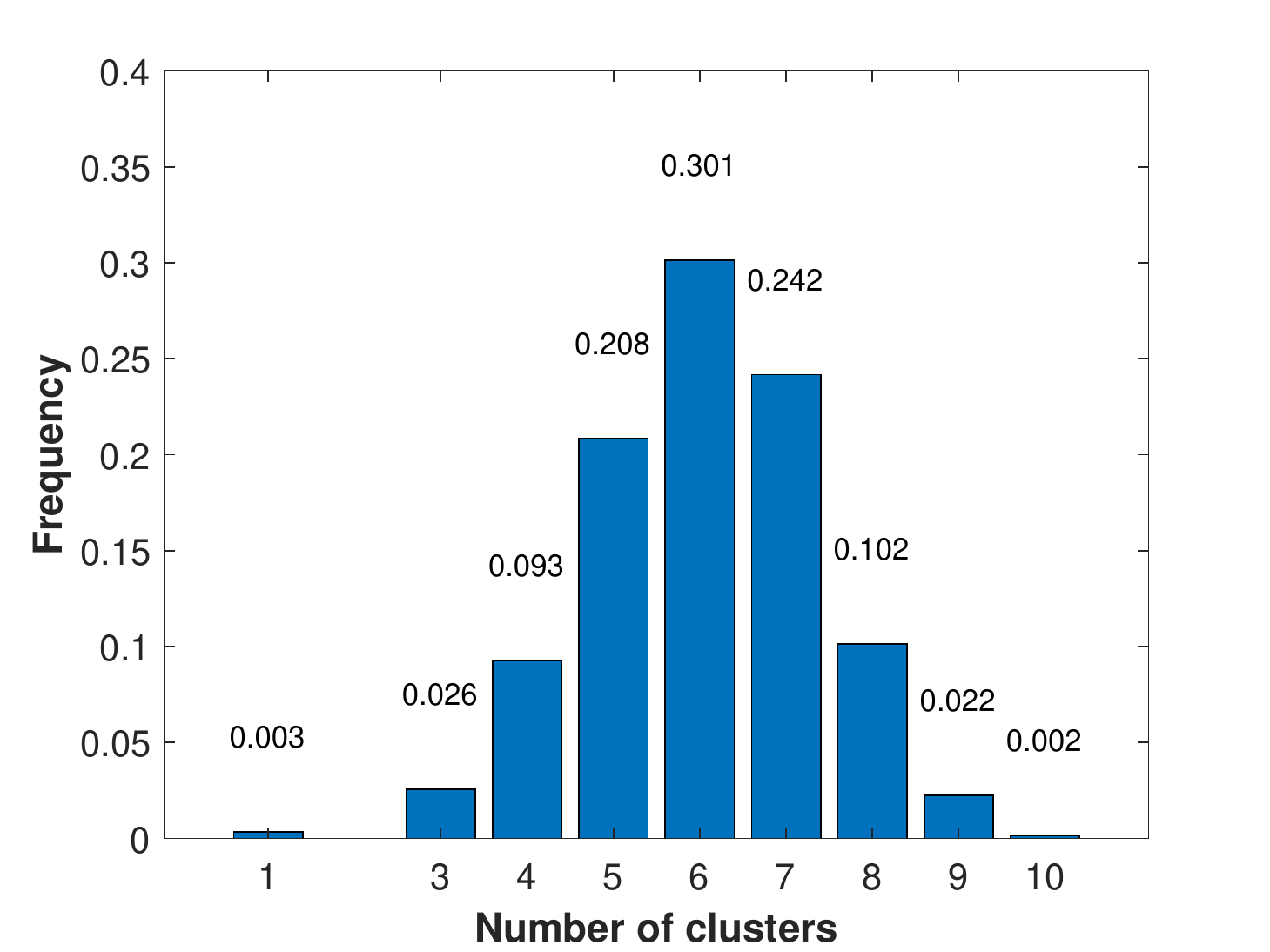}} 
  \subfloat[$n=20$]{\includegraphics[width=0.45\linewidth]{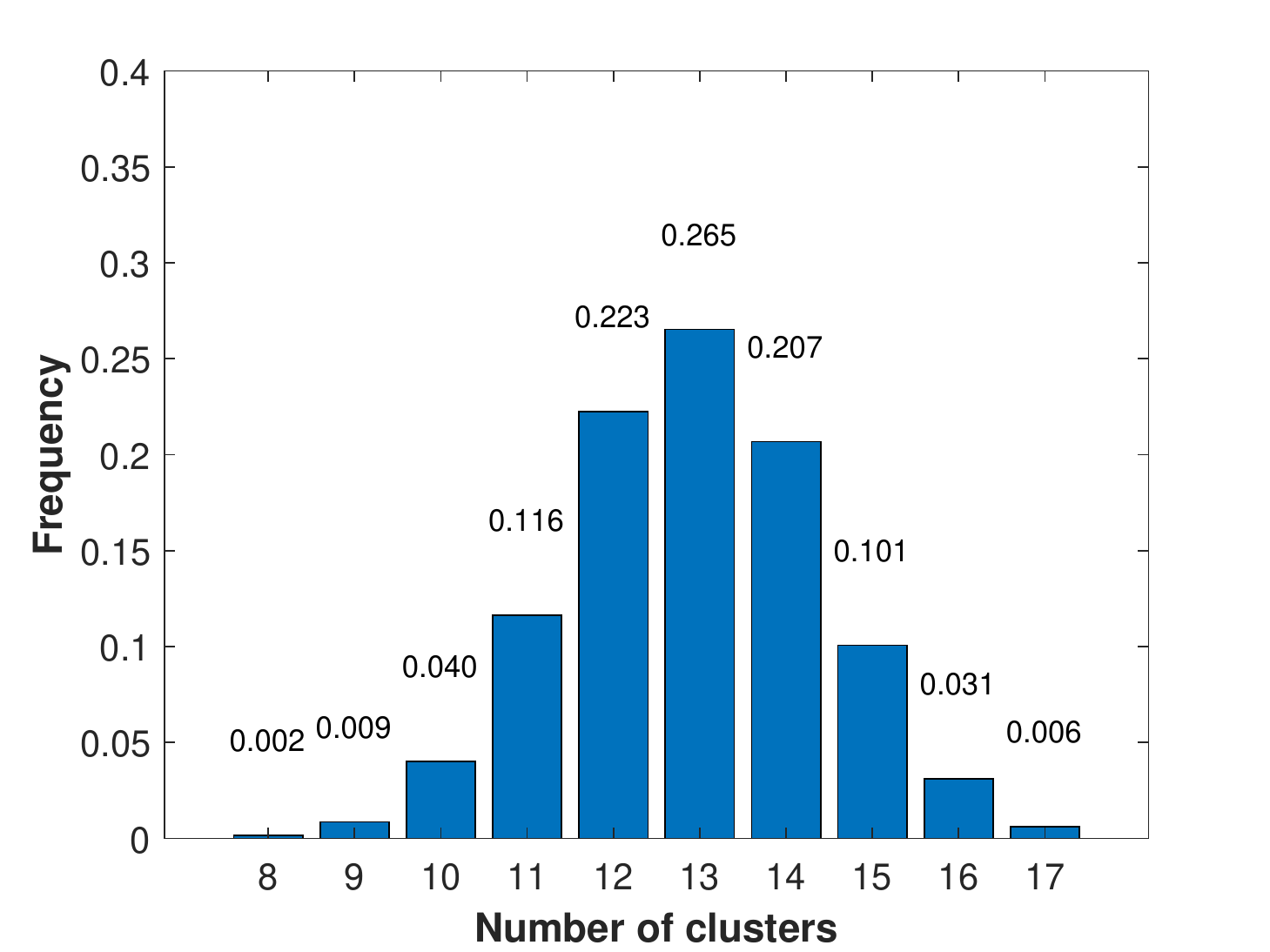}} 
 \caption{Empirical frequency of the final number of clusters for (complete) appraisal networks of sizes $11$ and $20$. For each plot, $10000$ simulations were performed with a randomly generated initial appraisal network. 
}
    \label{plot_1}
 \end{figure}

\section{Conclusion}

We propose, to the best of our knowledge, the first model of a signed
network formation game for the notion of clustering balance, whereby agents
can update only one interpersonal appraisal at a time. We have formally
shown how, in our proposed game, finding a Nash equilibrium can provide a model for
dynamic clustering balance. Moreover, our model has a psycho-sociological
interpretation in which the best-response policy results in the eradication
of cognitive dissonances among individuals in a social network. However,
broader interpretations can be given using the same underlying model; for
example, the interpretation of how countries or communities change their
positive or negative diplomatic relationships in order to avoid or create
conflict respectively. This makes our work relevant to the fields of
economics of conflict and political science. Finally, our model's
relationship to potential energy functions and combinatorial optimization
may make this work relevant to the physics and mathematical communities.
As future work, motivated by the structure of real-world networks, we plan to study signed network formation games 
that can consider the evolution of signed graphs that are not complete.
We also consider important to further study the structural balance dynamic model presented in subsection~\ref{con-sb}, and also study the evolution of directed signed graphs 
that can
allow the generation of dynamic models for other notions of social
balance.


\bibliographystyle{abbrvnat}
\bibliography{alias,Main,FB}

\end{document}